\tikzset{%
	highlight/.style={rectangle,rounded corners,fill=red!15,draw,fill opacity=0.3,thick,inner sep=0pt}
}
\tikzset{%
	highlight1/.style={rectangle,rounded corners,fill=blue!15,draw,fill opacity=0.3,thick,inner sep=0pt}
}
\theoremstyle{plain}
\newtheorem{thm}{Theorem}[section]
\newtheorem{lem}[thm]{Lemma}
\newtheorem{prop}[thm]{Proposition}
\theoremstyle{definition}
\newtheorem{rem}[thm]{Remark}
\numberwithin{equation}{section}
\newcommand{\F}{{\mathbb F}}
\begin{document}
	
	\title{Linear Complementary Pair Of Group Codes over Finite Chain Rings}
	\author{Cem G\"uner\.{I}, Edgar Mart\'inez-Moro, Selcen Say\i c\i}
	\thanks{C. G\"{u}neri and S. Say\i c\i\ are with Sabanc{\i} University, Faculty of Engineering and Natural Sciences,  \.{I}stanbul, Turkey. Email: guneri@sabanciuniv.edu, selcensayici@sabanciuniv.edu}
	\thanks{E. Martinez-Moro is with University of Valladolid, Institute of Mathematics, Castilla, Spain. Email: edgar.martinez@uva.es}

	\abstract
	Linear complementary dual (LCD) codes and linear complementary pair (LCP) of codes over finite fields have been intensively studied recently due to their applications in cryptography, in the context of side channel and fault injection attacks. The security parameter for an LCP of codes $(C,D)$ is defined as the minimum of the minimum distances $d(C)$ and $d(D^\bot)$. It has been recently shown that if $C$ and $D$ are both 2-sided group codes over a finite field, then $C$ and $D^\bot$ are permutation equivalent. Hence the security parameter for an LCP of  2-sided group codes $(C,D)$  is simply $d(C)$. We extend this result to 2-sided group codes over finite chain rings.

	\vspace{0.3cm}
	
	\noindent \emph{Keywords:\/} LCP of codes, group codes, finite chain rings, code equivalence.
	\endabstract
	\maketitle

	\section{Introduction}\label{intro}

	A pair of linear codes $(C,D)$ of length $n$ over a finite field $\F_q$ is called a linear complementary pair (LCP) of codes if $C \cap D = \{0\}$  and $C + D = \F_{q}^{n}$ (i.e. $C \oplus D = \F_{q}^{n}$). In the case $D=C^\bot$, $C$ is referred as a linear complementary dual (LCD) code. 
	
	LCD codes were introduced by Massey \cite{M} in 1992. There has been a revived interest in LCD and LCP of codes due to their application in protection against side channel and fault injection attacks (\cite{BDGNN, BCCGM2014}). In this context, the security parameter of an LCP $(C,D)$ is defined to be $\min\{d(C),d(D^\bot)\}$, where $d(C)$ stands for the minimum distance of the code $C$. In the LCD case, this parameter is simply $d(C)$, since $D^\bot=C$. 
	
	Carlet et al. (\cite{CGOOS}) showed that if $(C,D)$ is LCP, where $C$ and $D$ are both cyclic codes over a finite field $\F_q$, then $C$ is equivalent to $D^\bot$. They showed that the same result holds if $C$ and $D$ are 2D cyclic codes, under the assumption that the length of the codes is relatively prime to the characteristic of the finite field (semisimple case). Cyclic and 2D cyclic codes are special abelian codes, which are defined as ideals of the group algebra $\F_q[G]$ for a finite abelian group $G$.  In the case $\gcd(q,|G|)=1$, G\"{u}neri et al. extended this result  to LCP of abelian codes in $\F_q[G]$ (\cite{GOS}). If $G$ is any finite group (not necessarily abelian), a right ideal of $\F_q[G]$ is called a group code. In \cite{BCW}, Borello et al. obtained the most general statement for any finite group (also without a restriction on the order of the group) by showing that if $(C,D)$ is LCP of 2-sided group codes (ideals) in $\F_q[G]$, then $C$ and $D^\bot$ are permutation equivalent. Note in particular that this implies $d(C)=d(D^\bot)$. Hence, there is an LCP of 2-sided group codes over finite fields which has as good a security parameter as the 2-sided group code with the best minimum distance.
	
Although LCD and LCP of codes have been extensively studied over finite fields, the literature on codes over rings in this context is very limited. LCD codes over rings, including chain rings, have been addressed recently in  \cite{Rama,LL,LW}.  Here, we study LCP of codes over rings for the first time. Our main contribution is the extension of the result in \cite{BCW} to finite chain rings. Namely, we prove that for an LCP of 2-sided group codes $(C,D)$ in $R[G]$, where $R$ is a finite chain ring and $G$ is any finite group, $C$ and $D^\bot$ are equivalent codes (Theorem \ref{main result-3}).

	\section{Background}\label{generators}
Unless otherwise stated $R$ denotes a chain ring, which is a finite commutative ring with identity whose lattice of ideals forms a chain. Note that a finite field is a special chain ring.  It is clear that $R$ is a local ring and it is well-known that $R$ is a principal ideal ring. Let $\gamma$ be a generator of the maximal ideal and let the ideals of $R$ be 
$$R=R\gamma^0\supset R\gamma \supset \cdots \supset R\gamma^{v-1} \supset R\gamma^v=\{0\}.$$
The number $v$ with $\gamma^v=0$ is called the nilpotency index of $\gamma$. Note that since  $R$ is a commutative ring, $ R\gamma^{i}   = \gamma^{i}R          $ for all $i$.

It is clear that $R/R\gamma$ is a finite field, which we will denote by $\F_q$. The natural projection map $\varphi: R \rightarrow \F_q$ takes a ring element to its coset modulo $R\gamma$. This map is a surjective ring homomorphism and it extends to $R^n$ and takes values in $\F_q^n$ via
	\begin{equation} \label{the map}
	 (r_i) \longmapsto \varphi(r_i),
	\end{equation}
where $(r_i)$ denotes an $n$-tuple over $R$. We will denote the extended map by $\varphi$ as well, which is a surjective $R$-module homomorphism. The kernel of this map is the set of all $n$ tuples whose coordinates are multiples of $\gamma$ (i.e. $(\gamma R)^n$). We will also denote this set with $\gamma R^n$. Observe that $\varphi$ maps an $R$-submodule of $R^n$ to an $\F_q$-subspace of $\F_q^n$. An $R$-submodule of $R^n$ is called a linear code over $R$. Hence, $\varphi$ maps a linear code over $R$ to a linear code over $\F_q$.

Now, let $G$ be a finite group and denote by $R[G]$ the group ring of $G$ over $R$. Hence the elements of $R[G]$ are of the form $\sum_{g \in G} \alpha_{g} g$, where $\alpha_{g} \in R$ and $\alpha_g$ is nonzero for finitely many $g\in G$.  If $G$ has order $n$, then it is clear that $R[G]$ and $R^n$ are isomorphic as $R$-modules, where an element  $\sum_{g \in G} \alpha_{g} g \in R[G]$ is identified with the $n$ tuple $(\alpha_g)$. We will use this identification throughout the text. The group rings will be specifically used when we have results which are valid for group codes over $R$. A right ideal of $R[G]$ is called a group code over $R$ (see \cite{BCW} for group codes over finite fields). Our main result (Theorem \ref{main result-3}) holds for 2-sided ideals in $R[G]$. Therefore, unless otherwise stated, ideals will be 2-sided throughout and and they will be referred to as group codes. If $G$ is abelian, then a group code (ideal) in $R[G]$  is called an abelian code over $R$. 

\begin{rem}\label{isom of group rings}
If $G$ and $G'$ are finite multiplicative groups which are isomorphic via a map $\psi$, and if $R$ is any ring, then it is easy to see that $\psi$ extends to a ring isomorphism
		\[\begin{array}{cccc}
		\psi : & R[G] & \longrightarrow & R[G']\\
		& \displaystyle{\sum_{g\in G} r_g g} & \longmapsto & \displaystyle{\sum_{g\in G}r_g \psi(g)}
		\end{array}\]
Hence such a map takes a group code in $R[G]$ to a group code in $R[G']$. If $G=G'$, we can consider an automorphism of $G$ as a permutation on $G$. Note that an arbitrary permutation of $G$ does not necessarily preserve the ideal structure in $R[G]$ but those which are automorphisms do. 
\end{rem}
	\vspace{0.5cm}
A pair of linear codes $(C,D)$ in $R^n$ is called a linear complementary pair (LCP) of codes if $C\oplus D=R^n$. When $D=C^\bot$, $C$ is said to be a linear complementary dual (LCD) code over $R$. The dual in this article will always be considered with respect to the Euclidean inner product on $R^n$. It is easy to see that the dual of a group code in $R[G]$ is also a group code. 

For a finite field $\F$ and an arbitrary finite group $G$, consider LCP of (2-sided) group codes $(C,D)$ in $\F[G]$. Borello et al. showed in \cite{BCW} that $C$ is permutation equivalent to $D^\bot$. The permutation yielding the equivalence, which we will later denote by $\tau$, is the inversion automorphism that takes $g$ to $g^{-1}$, for all $g\in G$. We will extend this equivalence result to LCP of group codes over finite chain rings. 


\section{LCP of Group Codes over Chain Rings}\label{LCP chain ring}

We use the notation and notions introduced in Section \ref{generators}. In particular, all codes over $R$ are linear (i.e. $R$-module) and a group code in $R[G]$ is a 2-sided ideal. 

\begin{lem}\label{lem1} If $(C,D)$ is LCP of codes in $R^n$, then both $C$ and $D$ are free modules (codes).
\end{lem}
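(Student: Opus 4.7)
The plan I would follow is module-theoretic and very short: since $C\oplus D=R^n$ realizes $C$ as a direct summand of a free $R$-module, $C$ (and symmetrically $D$) is a finitely generated projective $R$-module, and over the local ring $R$ (whose unique maximal ideal is $R\gamma$) every finitely generated projective module is free by the usual Nakayama-lemma argument. This already yields the statement.

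If one prefers a proof using only tools already introduced in the paper, I would argue via the standard decomposition of submodules of $R^n$ over a chain ring. Every such submodule $C$ admits a standard-form generator matrix whose rows have leading entries $\gamma^i$ for various $i\in\{0,1,\ldots,v-1\}$ at distinct pivot columns, so that
\[
C \;\cong\; \bigoplus_{i=0}^{v-1}\bigl(R/R\gamma^{v-i}\bigr)^{k_i(C)}, \qquad |C|=\prod_{i=0}^{v-1} q^{(v-i)k_i(C)},
\]
for some nonnegative integers $k_i(C)$; in these terms $C$ is free if and only if $k_i(C)=0$ for every $i\geq 1$. Define $k_i(D)$ analogously.

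Two identities then do all the work. First, $|C|\cdot|D|=|R^n|=q^{vn}$ gives
\[
v\bigl(k_0(C)+k_0(D)\bigr)+\sum_{i=1}^{v-1}(v-i)\bigl(k_i(C)+k_i(D)\bigr)=vn.
\]
Second, applying the reduction map $\varphi$ of \eqref{the map} to $C+D=R^n$ yields $\varphi(C)+\varphi(D)=\F_q^n$, and since reducing any row of the form $\gamma^i g$ with $i\geq 1$ kills it while the $k_0(C)$ rows with $i=0$ reduce to $\F_q$-linearly independent vectors, one obtains $\dim_{\F_q}\varphi(C)=k_0(C)$ (and likewise for $D$). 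Consequently $k_0(C)+k_0(D)\geq n$, which together with the cardinality identity forces $k_0(C)+k_0(D)=n$ and $k_i(C)=k_i(D)=0$ for every $i\geq 1$. Hence both $C$ and $D$ are free. I do not anticipate a real obstacle; the only subtlety worth flagging is that $\varphi(C)$ can be a proper quotient of $C/\gamma C$ when $C$ is not free, but the dimension count above is with respect to $\varphi(C)\subseteq\F_q^n$, which is exactly what the surjectivity $\varphi(C)+\varphi(D)=\F_q^n$ directly supplies.
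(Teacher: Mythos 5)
Your first argument is precisely the paper's proof: $C$ and $D$ are direct summands of the free module $R^n$, hence projective, and since a chain ring is local, projective implies free (the paper cites Kaplansky's theorem for this step). Your second, self-contained argument is also correct: the standard-form decomposition gives $|C|=\prod_i q^{(v-i)k_i(C)}$ and $\dim_{\F_q}\varphi(C)=k_0(C)$, and combining $|C|\,|D|=|R^n|$ with $k_0(C)+k_0(D)\ge n$ does force $k_i(C)=k_i(D)=0$ for all $i\ge 1$; this route avoids Kaplansky at the cost of invoking the Norton--Salagean standard form, whereas the paper settles for the one-line projective-over-local argument.
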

	
\begin{proof}
Note that by definition (being direct summands of the free module $R^n$), both $C$ and $D$ are projective modules over $R$. A chain ring is local and by \cite[Theorem 2]{K}, a projective module over a local ring is free.
\end{proof}

\begin{prop}\label{lem2}
(i) If $(C,D)$ is LCP of codes in $R^n$, then $(\varphi(C),\varphi(D))$ is LCP of codes in $\F_q^n$. 

(ii) If $(C,D)$ is LCP of group codes in $R[G]$, then $(\varphi(C),\varphi(D))$ is LCP of group codes in $\F_q[G]$.
\end{prop}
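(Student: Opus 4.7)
The plan is to verify the two defining conditions of LCP separately for part (i), and then upgrade to ideals for part (ii). For the sum condition, surjectivity of the $R$-module map $\varphi$ gives the equality for free: $\varphi(C)+\varphi(D)=\varphi(C+D)=\varphi(R^n)=\F_q^n$. So the real content sits in the intersection condition $\varphi(C)\cap\varphi(D)=\{0\}$ and in showing that $\varphi$ respects the ideal structure for (ii).

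For the intersection condition, I would take $c\in C$ and $d\in D$ with $\varphi(c)=\varphi(d)$ and try to conclude that each side is already zero in $\F_q^n$. The hypothesis gives $c-d\in\ker\varphi=\gamma R^n$, so $c-d=\gamma x$ for some $x\in R^n$. The key move is to feed $x$ back through the direct sum decomposition: write $x=c_1+d_1$ with $c_1\in C$, $d_1\in D$, and rearrange to $c-\gamma c_1=d+\gamma d_1$. The left-hand side lies in $C$, the right in $D$, and $C\cap D=\{0\}$ forces both to vanish. Thus $c=\gamma c_1\in\gamma R^n$, which gives $\varphi(c)=0$ as required.

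For part (ii), I would invoke that $\varphi\colon R\to\F_q$ extends coefficient-wise to a surjective ring homomorphism $R[G]\to\F_q[G]$, compatible with the $R$-module map of part (i) under the identification $R[G]\cong R^n$. Since the image of a two-sided ideal under a surjective ring homomorphism is a two-sided ideal, $\varphi(C)$ and $\varphi(D)$ are group codes in $\F_q[G]$, and combining with part (i) gives the result.

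The main subtlety is the intersection step of part (i). One cannot simply read it off from a rank count even though Lemma \ref{lem1} guarantees freeness, because a priori an element of $C$ that lies in $\gamma R^n$ need not lie in $\gamma C$ without exploiting the direct sum; the decomposition trick above is exactly what converts the ambient divisibility by $\gamma$ into component-wise divisibility. Everything else is bookkeeping: surjectivity for sums, and the standard fact that surjective ring maps push two-sided ideals to two-sided ideals.
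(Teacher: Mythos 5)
Your proof is correct, and the sum condition plus part (ii) match the paper's treatment (the paper verifies the ideal property by an explicit computation with left multiplication by group elements, which is just an unwound version of your ``surjective ring homomorphisms push two-sided ideals to two-sided ideals''). The intersection step, however, is where you genuinely diverge. The paper takes $x=\varphi(c)=\varphi(d)$, notes $c-d\in\gamma R^n$, multiplies by $\gamma^{v-1}$ to get $z:=\gamma^{v-1}c=\gamma^{v-1}d\in C\cap D=\{0\}$, and then uses the local/chain structure of $R$ (a coordinate annihilated by $\gamma^{v-1}$ cannot be a unit, hence lies in the maximal ideal $\gamma R$) to conclude $c\in\gamma R^n$ and $\varphi(c)=0$. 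You instead write $c-d=\gamma x$, decompose $x=c_1+d_1$ along $R^n=C\oplus D$, and rearrange to put $c-\gamma c_1=d+\gamma d_1$ in $C\cap D=\{0\}$. Your route is arguably cleaner: it never invokes the nilpotency index or the fact that non-units lie in $\gamma R$, it only uses $\ker\varphi=\gamma R^n$ together with the direct sum, and it yields the slightly stronger conclusion $c\in\gamma C$ (which the paper only gets later, for free codes, via Proposition \ref{NS-2}(iii)). The paper's route, by contrast, leans on the chain-ring structure but needs no decomposition of auxiliary elements. Both are complete proofs.
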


\begin{proof}
(i) Let $x\in \F_q^n$. Since $R^n$ is the direct sum of $C, D$, and  $\varphi$ is surjective, there exist $c\in C, d\in D$ such that $x=\varphi(c)+\varphi(d)$. Hence, $\F_q^n$ is the sum of $\varphi(C)$ and $\varphi(D)$.

Let $x$ be in the intersection $\varphi(C)\cap \varphi(D)$. Then $x=\varphi(c)=\varphi(d)$, for some $c\in C, d\in D$. This gives $\varphi (c-d)=0$, and hence $(c-d)\in \gamma R^n$. Therefore, $\gamma^{v-1}(c-d)=0$. Set 
$$z:=\gamma^{v-1}c=\gamma^{v-1}d.$$
Note that $z$ is in $C\cap D$, which is by assumption trivial. So, $z=\gamma^{v-1}c=0$, which yields $c\in \gamma R^n$. Hence, $x=\varphi(c)=0$ and $\varphi(C)\cap \varphi(D)=\{0\}$.

(ii) We need to show that a left ideal $C\subset R[G]$ is mapped to a left ideal $\varphi(C) \subset \F_q[G]$, since the rest follows by part (i). For this, it suffices to show that $\varphi(C)$ is closed under left multiplication by an arbitrary element $g'\in G$, since being closed under left multiplication by a general element in $\F_q[G]$ then follows by linearity. If $\sum_g c_g g \in C$, then
\begin{equation*}
g'\varphi\left( \sum_{g} c_g g \right) =  g' \sum_{g} \varphi(c_g) g =  \sum_{g} \varphi(c_g) g'g = \varphi \left(g'\left(\sum_g c_g g  \right) \right).
\end{equation*}
Since $C$ is a left ideal, $g'\sum_g c_g g \in C$. Hence, $\varphi(C)$ is a left ideal in $\F_q[G]$. The proof for right ideal property is identical. 
\end{proof}
	
	For an element $r\in R$ and $x \in R^n$, $rx$ denotes the scalar multiplication, where each coordinate of $x$ is multiplied by $r$. For a code $C$ in $R[G]$, we set $rC:=\{rc: c\in C\}$. We define
	$$(C:r):=\{x\in R^n: \ rx\in C\},$$
which is a linear code in $R^n$. It is clear that
	$$C=(C:\gamma^0)\subseteq (C:\gamma)\subseteq \cdots \subseteq (C:\gamma^{v-1}),$$
	which implies 
	$$\varphi(C)=\varphi((C:\gamma^0))\subseteq \varphi((C:\gamma))\subseteq \cdots \subseteq \varphi((C:\gamma^{v-1})).$$
	We collect some facts which will be needed. Let us note that the dual code of $C \subset R^n$ (with respect to the Euclidean product) is defined as in codes over finite fields, and it is denoted by $C^\bot$.
	
	\begin{prop} (\cite[Theorem 3.10]{NS1}) \label{NS-1}
		Let $C$ be a code in $R^n$. Then,
		
		(i) $|C^\bot|=|R^n|/|C|$.
		
		(ii) $\varphi((C:\gamma^{v-1-i}))^\bot=\varphi((C^\bot:\gamma^i))$, for all $i$. 
	\end{prop}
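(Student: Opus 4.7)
My plan is to prove both parts through the structural theory of codes over a finite chain ring $R$, starting from the \emph{standard form} for a submodule $C \subseteq R^n$. After a suitable permutation of coordinates, $C$ admits a generator matrix whose nonzero rows have leading terms $\gamma^{j_1} \leq \gamma^{j_2} \leq \cdots \leq \gamma^{j_k}$ with $j_i < v$. This gives $C$ a well-defined type $(k_0, k_1, \ldots, k_{v-1})$, where $k_i$ counts the rows with leading $\gamma^i$. Almost every cardinality and dimension statement needed below can be read off directly from this form.

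For part (i), I would invoke the fact that a finite chain ring is a Frobenius ring (the socle $\gamma^{v-1}R \cong R/\gamma R$ is simple). Fixing a generating character $\chi : R \to \C^*$, the map $(x,y) \mapsto \chi(\langle x,y \rangle)$ is a perfect pairing on $R^n$ under which $C^\bot$ is exactly the annihilator of $C$; elementary character orthogonality then yields $|C|\cdot|C^\bot| = |R^n|$. Alternatively, from the standard form one reads off $|C| = \prod_i q^{(v-i)k_i}$, exhibits an explicit parity-check matrix whose rows are indexed by a ``dual'' type, and verifies the cardinality identity directly.

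For part (ii), the starting point is the adjointness identity
\[
(C^\bot : \gamma^i) = (\gamma^i C)^\bot,
\]
which follows immediately from $\langle \gamma^i x, c\rangle = \langle x, \gamma^i c\rangle$. This gives one inclusion: if $y \in (C^\bot : \gamma^i)$ and $u \in (C : \gamma^{v-1-i})$, then
\[
\gamma^{v-1}\langle y, u\rangle = \langle \gamma^i y, \gamma^{v-1-i}u\rangle = 0,
\]
and since the annihilator of $\gamma^{v-1}$ in $R$ is $\gamma R$, we conclude $\langle y, u\rangle \in \gamma R$, i.e., $\varphi(y)\cdot\varphi(u)=0$. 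Hence $\varphi((C^\bot : \gamma^i)) \subseteq \varphi((C : \gamma^{v-1-i}))^\bot$. For the reverse inclusion I would argue by dimension: using the standard form, $\dim_{\F_q}\varphi((C:\gamma^j)) = k_0 + k_1 + \cdots + k_j$; combining with the dual type of $C^\bot$ obtained in (i), one checks
\[
\dim\varphi((C^\bot:\gamma^i)) + \dim\varphi((C:\gamma^{v-1-i})) = n,
\]
which, together with the inclusion, forces the equality.

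The principal obstacle is organizing the standard form cleanly enough that the identification of the dual type in part (i) and the dimension counts in part (ii) both fall out uniformly. Once this machinery is in place, no deeper ideas are needed: everything reduces to careful bookkeeping of the indices $i$ and $v-1-i$ and the interplay between multiplying by $\gamma^j$ and reducing modulo $\gamma$.
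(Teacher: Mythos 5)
The paper offers no proof of this proposition at all --- it is quoted directly from Norton--Salagean \cite[Theorem 3.10]{NS1} --- so the only meaningful comparison is with that source, and your outline essentially reconstructs its argument: standard-form generator matrices, types $(k_0,\ldots,k_{v-1})$, the dual type, and dimension counts on the residue codes $\varphi((C:\gamma^j))$. Your outline is sound: the computation $\gamma^{v-1}\langle y,u\rangle=\langle \gamma^i y,\gamma^{v-1-i}u\rangle=0$ together with $\mathrm{Ann}(\gamma^{v-1})=\gamma R$ correctly yields the inclusion $\varphi((C^\bot:\gamma^i))\subseteq\varphi((C:\gamma^{v-1-i}))^\bot$, and the dimension identity you state does force equality. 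The one place where essentially all the work is concentrated --- and which you rightly flag as the principal obstacle --- is identifying the type of $C^\bot$ as $(n-k,k_{v-1},\ldots,k_1)$, which requires explicitly constructing a parity-check matrix in standard form from that of $C$; note that the character-theoretic Frobenius argument for (i), while correct and cleaner for the cardinality statement, gives only $|C||C^\bot|=|R|^n$ and not the graded dimension data $\dim\varphi((C^\bot:\gamma^i))$ that your proof of (ii) needs, so the standard-form bookkeeping cannot be bypassed.
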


	\begin{prop} (\cite[Proposition 3.13]{NS1}, \cite[Proposition 3.11 and Corollary 3.12]{NS2}) \label{NS-2}
		The following holds for a free code $C$ in $R^n$. 
		
		(i) $C^\bot$ is free.
		
		(ii) $\varphi(C)= \varphi((C:\gamma))= \cdots = \varphi((C:\gamma^{v-1}))$. 
		
		(iii) $C\cap \gamma^iR^n=\gamma^iC$, for all $i$.
		
		(iv) For $\tilde{C}:=C \setminus \gamma R^n=C\setminus \gamma C$, we have $C=\tilde{C}\cup \gamma \tilde{C} \cup \cdots \cup \gamma^{v-1} \tilde{C}\cup \{0\}$.
	\end{prop}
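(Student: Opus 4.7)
The plan is to prove the four items in the order (iii), (ii), (iv), (i), since each naturally feeds the next. The central tool is the fact that a free code $C\subseteq R^n$ of rank $k$ admits, after a suitable coordinate permutation, a systematic generator matrix of the form $G=(I_k\mid A)$ with $A\in R^{k\times(n-k)}$. This normal form is obtained by a Gaussian-type elimination argument over the chain ring $R$, exploiting that every nonzero element of $R$ is a unit times some power of $\gamma$ and that a free module has a basis on which one can pivot by units.

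For (iii), the inclusion $\gamma^iC\subseteq C\cap\gamma^iR^n$ is trivial. Conversely, if $c=xG\in\gamma^iR^n$ for some $x\in R^k$, then inspecting the $k$ systematic coordinates of $c$ shows that $x=\gamma^ix'$ for some $x'\in R^k$, hence $c=\gamma^i(x'G)\in\gamma^iC$. Part (ii) then follows easily: given $x\in(C:\gamma^{v-1})$, one has $\gamma^{v-1}x\in C\cap\gamma^{v-1}R^n=\gamma^{v-1}C$ by (iii), so $\gamma^{v-1}x=\gamma^{v-1}c$ for some $c\in C$. Since the annihilator of $\gamma^{v-1}$ in $R$ is exactly $R\gamma$, this forces $x-c\in\gamma R^n$, i.e.\ $\varphi(x)=\varphi(c)\in\varphi(C)$. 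Squeezing this between the obvious containments yields the chain of equalities.

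For (iv), the descending chain $C\supseteq\gamma C\supseteq\cdots\supseteq\gamma^vC=\{0\}$ assigns to each nonzero $c\in C$ a unique largest index $i\in\{0,\ldots,v-1\}$ with $c\in\gamma^iC\setminus\gamma^{i+1}C$; writing $c=\gamma^ic'$ with $c'\in C$, maximality of $i$ forces $c'\notin\gamma C$, so $c'\in\tilde C$ and $c\in\gamma^i\tilde C$. Disjointness follows because if $\gamma^ic'=\gamma^jc''$ with $c',c''\in\tilde C$ and $i<j\le v-1$, then $c'-\gamma^{j-i}c''$ is annihilated by $\gamma^i$ and hence lies in $\gamma^{v-i}R^n\subseteq\gamma R^n$, contradicting $c'\in\tilde C$. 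For (i), using $G=(I_k\mid A)$ one verifies that $H=(-A^\top\mid I_{n-k})$ satisfies $GH^\top=0$, so the row space of $H$ is a free submodule of $C^\bot$ of rank $n-k$. A cardinality count via Proposition \ref{NS-1}(i) gives $|C^\bot|=|R|^{n-k}$, forcing $C^\bot$ to coincide with this row space and hence to be free.

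The main obstacle is the very first step, namely the existence of a systematic generator matrix for a free code over a chain ring. Over a field this is classical Gaussian elimination, but over $R$ one must be careful to pivot on units rather than on arbitrary nonzero entries, and to argue that freeness of $C$ guarantees the existence of such a pivot at each reduction step (otherwise the putative basis would fail to be $R$-linearly independent). Once this normal form is in hand, the remaining steps reduce to routine bookkeeping with the annihilator relations $\mathrm{ann}_R(\gamma^i)=R\gamma^{v-i}$.
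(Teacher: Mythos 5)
Your proposal is correct, but note that the paper does not prove this proposition at all: it is imported verbatim from Norton--Salagean (\cite[Theorem/Prop.~3.13]{NS1}, \cite[Prop.~3.11, Cor.~3.12]{NS2}), so there is no in-paper argument to compare against. Your self-contained proof follows the same standard route as those references, namely reducing everything to the existence of a systematic generator matrix $(I_k \mid A)$ for a free code over a chain ring. The one step you compress --- that a unit pivot exists at each elimination stage --- is correctly justified by the reason you give: if some $R$-combination $\sum a_i c_i$ of basis rows with a unit coefficient $a_1$ lay in $\gamma R^n$, multiplying by $\gamma^{v-1}$ would produce the nontrivial relation $\sum (\gamma^{v-1}a_i)c_i=0$ with $\gamma^{v-1}a_1\neq 0$, contradicting $R$-linear independence; equivalently, the reduction $\varphi(G)$ has full rank $k$ over $\F_q$, so some $k\times k$ minor is a unit. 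Granting that, your deductions of (iii), (ii), (iv) and (i) (the last via the parity-check matrix $(-A^\top\mid I_{n-k})$ together with the cardinality count from Proposition~\ref{NS-1}(i)) are all sound; you even prove disjointness in (iv), which is more than the statement requires.
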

	
	We are ready to proceed with the steps of our proof.
	
	\begin{prop}\label{dual LCP}
		If $(C,D)$ is LCP of codes in $R^n$, then $(C^\bot,D^\bot)$ is also LCP.
	\end{prop}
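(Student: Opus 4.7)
The plan is to prove the two defining properties of an LCP, namely $C^\bot \cap D^\bot = \{0\}$ and $C^\bot + D^\bot = R^n$, using the standard duality identity together with the cardinality formula recorded in Proposition \ref{NS-1}(i).

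First I would observe the elementary fact $(C+D)^\bot = C^\bot \cap D^\bot$, which follows directly from the bilinearity of the Euclidean inner product on $R^n$ and holds over any commutative ring. Applying it to the hypothesis $C + D = R^n$ immediately yields
\[
C^\bot \cap D^\bot = (R^n)^\bot = \{0\}.
\]

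Next I would establish that the sum $C^\bot + D^\bot$ fills $R^n$ by a counting argument. Since $C \cap D = \{0\}$ and $C+D=R^n$, we have $|C|\cdot|D| = |R^n|$. Combining this with Proposition \ref{NS-1}(i), which gives $|C^\bot| = |R^n|/|C|$ and $|D^\bot| = |R^n|/|D|$, we get $|C^\bot|\cdot|D^\bot| = |R^n|$. For any two submodules $M,N$ of the finite additive group $R^n$, the standard identity $|M+N|\cdot|M\cap N| = |M|\cdot|N|$ (a consequence of the second isomorphism theorem for abelian groups) applied to $M = C^\bot$ and $N = D^\bot$ gives $|C^\bot + D^\bot| = |C^\bot|\cdot|D^\bot| = |R^n|$, hence $C^\bot + D^\bot = R^n$.

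Together the two facts give $C^\bot \oplus D^\bot = R^n$, which is the LCP condition. I do not foresee a genuine obstacle here; the chain-ring hypothesis enters only implicitly through Proposition \ref{NS-1}(i), and freeness of $C$ and $D$ from Lemma \ref{lem1} is not needed for this particular statement. The argument is essentially the same as over a finite field, except that dimensions are replaced by cardinalities.
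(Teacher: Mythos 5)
Your proposal is correct and follows essentially the same route as the paper: the triviality of $C^\bot\cap D^\bot$ comes from $x$ being orthogonal to all of $C+D=R^n$, and the sum being all of $R^n$ comes from the cardinality count $|C^\bot||D^\bot|=|R^n|^2/(|C||D|)=|R^n|$ via Proposition \ref{NS-1}(i). The only cosmetic difference is that you invoke the identity $|M+N|\,|M\cap N|=|M|\,|N|$ where the paper directly verifies that distinct pairs $(c',d')$ give distinct sums; these are the same counting argument.
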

	
	\begin{proof}
		Let $x$ be an element of $C^\bot \cap D^\bot$ and let $u=u_C+u_D$ be an arbitrary element in $R^n$, where $u_C\in C$ and $u_D\in D$. Then the Euclidean product of $x$ and $u$ is
		$$x\cdot (u_C+u_D)=x\cdot u_C + x\cdot u_D=0,$$
		since $x$ is orthogonal to both $C$ and $D$. So, $x=0$ since its inner product with any element in $R^n$ is 0. Therefore $C^\bot \cap D^\bot =\{0\}$. 
		
		For $c,c'\in C^\bot$ and $d,d'\in D^\bot$, if $c+d=c'+d'$ then $c-c'=d'-d \in C^\bot \cap D^\bot$. But this intersection is shown to be trivial, hence $c=c'$ and $d=d'$. Therefore the number of elements in $C^\bot +D^\bot = \{c'+d': \ c'\in C^\bot, d' \in D^\bot\}$ is $|C^\bot| |D^\bot|$. By Proposition \ref{NS-1}, 
		$$|C^\bot| |D^\bot| = \frac{|R^n|^2}{|C||D|}=|R^n|.$$
		Hence, $C^\bot + D^\bot =R^n$. The result follows since the two dual codes intersect only at 0.
	\end{proof}
	
	\begin{prop}\label{duals}
		(i) For a free code $C\subset R^n$, we have $\varphi(C)^\bot=\varphi(C^\bot)$. 
		
		(ii) If $(C,D)$ is LCP of group codes in $R[G]$, then $\varphi(C)$ and $\varphi(D^\bot)$ are equivalent codes.
	\end{prop}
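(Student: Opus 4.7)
The plan is to handle part (i) first and to use it as the crucial ingredient in part (ii).

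For (i), I would specialize the identity in Proposition \ref{NS-1}(ii) at $i=0$, obtaining
\[\varphi((C:\gamma^{v-1}))^\bot = \varphi(C^\bot).\]
Since $C$ is free, Proposition \ref{NS-2}(ii) collapses the tower $\varphi(C) \subseteq \varphi((C:\gamma)) \subseteq \cdots \subseteq \varphi((C:\gamma^{v-1}))$ to a single subspace, so $\varphi((C:\gamma^{v-1})) = \varphi(C)$. Substituting yields $\varphi(C)^\bot = \varphi(C^\bot)$.

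For (ii), the idea is to push the problem down to the residue field, apply the theorem of Borello et al.\ there, and pull the conclusion back using part (i). Concretely, Lemma \ref{lem1} gives that $C$ and $D$ are free; combining Proposition \ref{dual LCP} with another application of Lemma \ref{lem1} shows that $C^\bot$ and $D^\bot$ are free as well. By Proposition \ref{lem2}(ii), $(\varphi(C), \varphi(D))$ is an LCP of $2$-sided group codes in $\F_q[G]$. The theorem of Borello et al.\ recalled at the end of Section \ref{generators} then says that $\varphi(C)$ and $\varphi(D)^\bot$ are permutation equivalent via the inversion automorphism $\tau$ of $G$. Finally, since $D$ is free, part (i) identifies $\varphi(D)^\bot$ with $\varphi(D^\bot)$, and the claim follows.

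The main hurdle is recognising that free-ness of $C$ is precisely what is needed in order to identify both sides of Proposition \ref{NS-1}(ii) with $\varphi(C)$ and $\varphi(C^\bot)$; once that is in place, part (ii) is essentially a clean assembly of Lemma \ref{lem1}, Proposition \ref{lem2}(ii), Proposition \ref{dual LCP}, part (i), and the field-level equivalence result of \cite{BCW}.
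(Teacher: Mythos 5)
Your proof is correct and follows essentially the same route as the paper: part (i) via Proposition \ref{NS-1}(ii) combined with the collapse of the tower from Proposition \ref{NS-2}(ii), and part (ii) by reducing to $\F_q[G]$ via Proposition \ref{lem2}, invoking the equivalence result of \cite{BCW}, and applying part (i) to the free code $D$. The only (immaterial) difference is that you instantiate Proposition \ref{NS-1}(ii) at $i=0$ and use freeness of $C$ directly, whereas the paper instantiates it at $i=v-1$ and therefore needs Proposition \ref{NS-2}(i) to know $C^\bot$ is free; both work, and your mention of $C^\bot$ and $D^\bot$ being free is harmless but not needed.
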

	
	\begin{proof}
		(i) We have $\varphi(C)^\bot=\varphi((C^\bot:\gamma^{v-1}))$ by Proposition \ref{NS-1}. By Proposition \ref{NS-2} ((i) and (ii)), $\varphi((C^\bot:\gamma^{v-1}))=\varphi(C^\bot)$ for the free code $C^\bot$. Hence the result follows.
		
		(ii) By Proposition \ref{lem2}, $(\varphi(C), \varphi(D))$ is LCP of group codes in $\F_q[G]$. Then by \cite{BCW} (cf. Section \ref{generators}), $\varphi(C)$ and $\varphi(D)^\bot$ are equivalent group codes.  The result follows since $D$ is a free code and we have $\varphi(D)^\bot=\varphi(D^\bot)$ by part (i).
	\end{proof}
	
	\begin{rem} \label{proof-obs1}
		Note that for an LCP of group codes $(C,D)$ in $R[G]$, we have
		\begin{eqnarray*} \label{cardinalities}
		|D^\bot| & = & \frac{|R[G]|}{|D|} \ \ \mbox{(by Proposition \ref{NS-1}(i))} \nonumber \\
		&=& \frac{|C||D|}{|D|} \ \ \mbox{(since $C\oplus D=R[G]$)}  \\
		&=& |C|. \nonumber
		\end{eqnarray*}
		Let $\tau$ denote the permutation between $\varphi(C)$ and $\varphi(D^\bot)$ (\cite{BCW}). Then, 
$$\varphi(\tau(C))=\tau(\varphi(C))=\varphi(D^\bot).$$
For a free code over $R$, the minimum distance is equal to the minimum distance of its image under $\varphi$ (\cite[Corollary 4.3]{NS2}). A permutation clearly preserves the minimum distance. Hence, we have
		$$d(C)=d(\tau(C))=d(\varphi(\tau(C)))=d(\varphi(D^\bot))=d(D^\bot).$$
Our aim is to lift the equivalence $\tau$  between $\varphi(C)$ and $\varphi(D^\bot)$ to an equivalence between $C$ and $D^\bot$, whose cardinalities and minimum distances have been shown to be equal. 
	\end{rem}
	\vspace{0.5cm}
	From this point on, we consider an LCP of group codes $(C,D)$ in $R[G]$, since we will build up a proof for the main result (Theorem \ref{main result-3}) from the permutation equivalence between $\varphi(C)$ and $\varphi(D^\bot)$ (cf. Proposition \ref{duals}, Remark \ref{proof-obs1}). However, note that Proposition \ref{two representations} is true more generally (for free codes in $R^n$). 

If we restrict the map $\varphi : R[G] \rightarrow \F_q[G]$ to the (free) group codes $C$ and $D^\bot$, and use Proposition \ref{NS-2}(iii), we obtain the isomorphisms
	\begin{equation} \label{isoms}
	C/(C\cap \gamma R[G]) = C/\gamma C \simeq \varphi(C) \ \ \mbox{and} \ \ D^\bot/(D^\bot \cap \gamma R[G]) = D^\bot /\gamma D^\bot \simeq \varphi(D^\bot).
	\end{equation}
	Let $t:=|\varphi(C)|=|\varphi(D^\bot)|$ and set the elements of the cosets $C/\gamma C$ and $D^\bot/\gamma D^\bot$ as follows:
	\begin{eqnarray*}
		C/\gamma C & := & \left\{c_1+\gamma C=\gamma C, c_2+\gamma C, \ldots , c_t+\gamma C \right\},\\
		D^\bot/\gamma D^\bot & := & \left\{d_1+\gamma D^\bot=\gamma D^\bot, d_2+\gamma D^\bot, \ldots , d_t+\gamma D^\bot \right\}.
	\end{eqnarray*} 
	(i.e. $c_1=0=d_1$ in $R[G]$). Clearly, cosets partition the codes $C$ and $D^\bot$:
	\begin{equation}\label{disj unions}
	C=\underset{1\leq i \leq t} {{\dot{\bigcup}}} (c_i+\gamma C)  \ \ \mbox{and} \ \ D^\bot =\underset{1\leq i \leq t} {{\dot{\bigcup}}} (d_i+\gamma D^\bot)
	\end{equation}
	Note that $\varphi$ is constant on cosets, since a multiple of $\gamma$ is mapped to 0. Namely for all $i=1,\ldots ,t$, we have
	\[\begin{array} {cccccc}
	\varphi(c_i+\gamma c) & = & \varphi(c_i)+\varphi(\gamma c) & = & \varphi(c_i) & \mbox{for all $c\in C$}, \\
	\varphi(d_i+\gamma d) & = & \varphi(d_i)+\varphi(\gamma d) & = & \varphi(d_i) & \mbox{for all $d\in D^\bot$}.
	\end{array}\]
	Moreover $\varphi(c_i)\not= \varphi (c_j)$ (for $i\not= j$), since otherwise $c_i$ and $c_j$ would be in the same coset modulo $\gamma C$. The same holds for representatives of cosets of $D^\bot$ modulo $\gamma D^\bot$. Hence, we have 
	\begin{eqnarray}
	\varphi(C) & = & \left\{\varphi (c_1)=0, \varphi (c_2), \ldots , \varphi(c_t) \right\}, \nonumber \\
	\varphi(D^\bot) & = & \left\{\varphi (d_1)=0, \varphi (d_2), \ldots , \varphi(d_t) \right\}. \nonumber
	\end{eqnarray}
	Without loss of generality, we assume that the coset representatives are indexed so that the permutation $\tau$ between the equivalent codes $\varphi(C)$ and $\varphi(D^\bot)$ (cf. Remark \ref{proof-obs1}) satisfies 
	\begin{equation}\label{perm assump}
	\tau(\varphi(c_i))=\varphi(\tau(c_i))=\varphi (d_i), \ \ \mbox{for all $i=1,\ldots ,t$}.
	\end{equation}
	Note that this implies
	\begin{equation}\label{perm assump-2}
	\tau(c_i)-d_i \in \gamma R[G] \ \ \mbox{for all $i=1,\ldots ,t$}.
	\end{equation}
	
	Before the proof of the main result, let us state the following which gives a generating set as an $R$-module for a free code $C$ in $R[G]$. 
	
	\begin{prop} \label{two representations}
		Let $C$ be a free code in $R[G]$ with the following representation (cf. (\ref{disj unions})):
		$$C=\underset{1\leq i \leq t} {{\dot{\bigcup}}} (c_i+\gamma C).$$
		Let $S:=\{c_2,\ldots , c_t\}$. Then any element of $C$ can be represented as sum of the elements in 
		$$S \cup \gamma S \cup \cdots \cup \gamma^{v-1} S.$$
	\end{prop}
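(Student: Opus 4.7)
The plan is to iteratively peel off the top coset of $c \in C$ using the partition $C = \dot{\bigcup}_{i=1}^{t}(c_i + \gamma C)$. Given any $c \in C$, there is a unique index $i_1$ with $c \in c_{i_1} + \gamma C$, so $c = c_{i_1} + \gamma c^{(1)}$ for some $c^{(1)} \in C$. Applying the same coset decomposition to $c^{(1)}$ yields $c^{(1)} = c_{i_2} + \gamma c^{(2)}$ with $c^{(2)} \in C$, and hence $c = c_{i_1} + \gamma c_{i_2} + \gamma^2 c^{(2)}$. Iterating this procedure $v$ times produces
$$c \;=\; c_{i_1} + \gamma c_{i_2} + \gamma^2 c_{i_3} + \cdots + \gamma^{v-1} c_{i_v} + \gamma^v c^{(v)}.$$

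Since $\gamma^v = 0$ and $c^{(v)} \in C \subseteq R[G]$, the final term vanishes, leaving $c = \sum_{j=1}^{v} \gamma^{j-1} c_{i_j}$ with each $c_{i_j} \in \{c_1, c_2, \ldots, c_t\} = \{0\} \cup S$. Summands for which $c_{i_j} = c_1 = 0$ contribute nothing, and the remaining summands are of the form $\gamma^{j-1} s$ with $s \in S$ and $0 \leq j-1 \leq v-1$; that is, they lie in $S \cup \gamma S \cup \cdots \cup \gamma^{v-1} S$. This yields the desired representation.

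The only point that requires a moment of care is ensuring that each remainder $c^{(k)}$ lies back in $C$, not merely in $R[G]$. This is guaranteed because the cosets in the given partition are taken modulo $\gamma C$, so $c^{(k-1)} - c_{i_k} \in \gamma C$ directly provides some $c^{(k)} \in C$ with $\gamma c^{(k)} = c^{(k-1)} - c_{i_k}$. Freeness of $C$ enters implicitly via Proposition \ref{NS-2}(iii), which identifies $\gamma C$ with $C \cap \gamma R[G]$ and so makes the two natural coset structures on $C$ coincide. I do not expect any serious obstacle: the argument is a straightforward successive-lifting procedure driven by the nilpotency relation $\gamma^v = 0$ of the maximal ideal.
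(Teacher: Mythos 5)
Your proof is correct, and it takes a more elementary route than the paper's. The paper argues at the level of sets: it invokes Proposition \ref{NS-2}(iv) to decompose the free code as $C=\tilde{C}\cup\gamma\tilde{C}\cup\cdots\cup\gamma^{v-1}\tilde{C}\cup\{0\}$ with $\tilde{C}=C\setminus\gamma C$, rewrites $\tilde{C}$ as the union of the nonzero cosets $c_i+\gamma C$, and then unfolds the sets $\gamma^{v-1}\tilde{C}$, $\gamma^{v-2}\tilde{C}$, and so on, working from the top power of $\gamma$ downward. You instead expand a single element from the bottom up, peeling off one coset representative at a time and using only the given partition of $C$ into cosets of $\gamma C$ together with the nilpotency $\gamma^v=0$. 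The point you flag --- that each remainder $c^{(k)}$ lies in $C$ and not merely in $R[G]$ --- is exactly the right one to check, and it is settled as you say because the cosets are taken modulo $\gamma C=\{\gamma c:\ c\in C\}$. What your route buys: it bypasses Proposition \ref{NS-2}(iv) entirely, and in fact it never uses freeness of $C$, since the coset partition modulo the submodule $\gamma C$ is the only input; so your argument proves the statement for an arbitrary linear code in $R[G]$. (Your closing appeal to Proposition \ref{NS-2}(iii) is more caution than the argument needs; that identification is what relates $C/\gamma C$ to $\varphi(C)$ elsewhere in the paper, but it plays no role in your iteration.) The paper's set-level unfolding additionally records which sums arise from which stratum $\gamma^{j}\tilde{C}$, but for the stated conclusion your element-wise $\gamma$-adic expansion is cleaner.
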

	
	\begin{proof}
		By Proposition \ref{NS-2}, we have
		$$C=\tilde{C}\cup \gamma \tilde{C} \cup \cdots \cup \gamma^{v-1} \tilde{C}\cup \{0\},$$
		where $\tilde{C}=C\setminus \gamma C$. Since cosets modulo $\gamma C$ partition $C$, and recalling that $c_1=0$, we have
		\begin{eqnarray*}
			\tilde{C} & = & (c_2 +\gamma C) \ \dot{\cup} \cdots \ \dot{\cup} \ (c_t+\gamma C),\\
			\gamma C & = & \gamma \tilde{C} \cup \cdots \cup \gamma^{v-1} \tilde{C} \cup \{0\}.
		\end{eqnarray*}
		Hence, 
		$$\tilde{C}=\underset{2\leq i \leq t} {{\dot{\bigcup}}} (c_i+\gamma C)= \underset{2\leq i \leq t} {{\dot{\bigcup}}} \left(c_i+(\gamma \tilde{C} \cup \cdots \cup \gamma^{v-1} \tilde{C} \cup \{0\})\right).$$
		Since $\gamma^v=0$, we have
		\[\begin{array}{lll}
		\gamma^{v-1}\tilde{C} & = & \displaystyle{\bigcup_{i=2}^t}\left\{ \gamma^{v-1}c_i\right\} , \\
		\gamma^{v-2}\tilde{C} & = & \displaystyle{\bigcup_{i=2}^t} \left(\gamma^{v-2}c_i +(\gamma^{v-1}\tilde{C})\right)\\
		& = &  \displaystyle{\bigcup_{i=2}^t} \left(\gamma^{v-2}c_i +\left(\displaystyle{\bigcup_{i=2}^t} \left\{\gamma^{v-1}c_i\right\} \right)\right) .
		\end{array}\]
		Continuing in the same manner until $\gamma \tilde{C}$, we obtain the desired result.
	\end{proof}
	
	We are ready to prove the main result for LCP of group codes (2-sided ideals) over a chain ring.
	
	\begin{thm}\label{main result-3}
		Let  $(C,D)$  be an LCP of group codes in $R[G]$, where $R$ is a finite chain ring and $G$ is a finite group. Then $C$ and $ D^{\bot} $ are equivalent codes. 
	\end{thm}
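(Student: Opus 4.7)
The plan is to prove that the inversion map $\tau:R[G]\to R[G]$, namely $g\mapsto g^{-1}$ extended by $R$-linearity, satisfies $\tau(C)=D^{\bot}$; since $\tau$ acts as a coordinate permutation on $R^{|G|}$, this yields the required equivalence. Because $|C|=|D^{\bot}|$ by Remark \ref{proof-obs1}, it is enough to establish the inclusion $\tau(C)\subseteq D^{\bot}$.

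First I would use Proposition \ref{two representations} to reduce the problem to the chosen coset representatives: every $c\in C$ is a sum of elements of $S\cup\gamma S\cup\cdots\cup\gamma^{v-1}S$, so $\tau(c)$ is a sum of terms $\gamma^{j}\tau(c_i)$ by $R$-linearity. Since $D^{\bot}$ is closed under addition and $R$-scalar multiplication, it suffices to verify $\tau(c_i)\in D^{\bot}$ for every $i\in\{2,\ldots,t\}$. From equation (\ref{perm assump-2}) we already know $\tau(c_i)-d_i\in\gamma R[G]$; the task is to strengthen this to $\tau(c_i)-d_i\in\gamma R[G]\cap D^{\bot}=\gamma D^{\bot}$, the last equality following by freeness (Proposition \ref{NS-2}(iii)).

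The main obstacle is precisely this strengthening, and my strategy is to extract a central idempotent from the two-sided ideal structure. Since $C$ and $D$ are two-sided ideals with trivial intersection, $CD\subseteq C\cap D=\{0\}$ and symmetrically $DC=\{0\}$. Writing $1=e+f$ with $e\in C$ and $f\in D$, the relation $e=e(e+f)=e^{2}+ef$ combined with $ef\in CD=\{0\}$ forces $e^{2}=e$; and for any $r\in R[G]$ the element $re-er\in C$ also equals $fr-rf\in D$ and hence lies in $C\cap D=\{0\}$, showing $e$ is central. Consequently $C=eR[G]$ and $D=(1-e)R[G]$.

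Finally, since $\tau$ is an $R$-linear anti-automorphism of $R[G]$ (because $\tau(gh)=(gh)^{-1}=h^{-1}g^{-1}=\tau(h)\tau(g)$), one obtains $\tau(C)=\tau(eR[G])=R[G]\tau(e)=\tau(e)R[G]$, with $\tau(e)$ itself a central idempotent. To identify $D^{\bot}$, I would use the adjoint identity $\langle x,yz\rangle=\langle\tau(y)x,z\rangle$ for the group-algebra Euclidean inner product: $x\in D^{\bot}$ iff $\langle x,(1-e)r\rangle=0$ for every $r\in R[G]$, iff $(1-\tau(e))x=0$, iff $x\in\tau(e)R[G]$. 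Combining, $\tau(C)=\tau(e)R[G]=D^{\bot}$, and via the reduction in the second paragraph this completes the proof.
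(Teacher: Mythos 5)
Your proof is correct, but it takes a genuinely different --- and in fact more direct --- route than the paper. The paper reduces modulo the maximal ideal, invokes the Borello--de la Cruz--Willems equivalence for $(\varphi(C),\varphi(D))$ over the residue field $\F_q$, and then lifts that equivalence back to $R[G]$ through coset representatives, a descending argument showing $\tau(C)\cap C^\perp=\{0\}$, and the splitting $1=a+b$ with $a\in C^\perp$, $b\in D^\perp$. You instead work entirely inside $R[G]$: the Peirce decomposition of the two-sided LCP gives a central idempotent $e$ with $C=eR[G]$ and $D=(1-e)R[G]$ (your verification of $e^2=e$ via $ef\in CD=\{0\}$ and of centrality via $re-er=fr-rf\in C\cap D$ is sound), and the adjoint identity $\langle x,yz\rangle=\langle\tau(y)x,z\rangle$ for the inversion anti-automorphism, together with nondegeneracy of the Euclidean form on $R^{|G|}$, yields $D^\perp=\{x:(1-\tau(e))x=0\}=\tau(e)R[G]=\tau(C)$ exactly. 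This buys you more than the stated theorem: your argument nowhere uses that $R$ is a chain ring (no $\gamma$, no freeness, no reduction $\varphi$), so it proves the result over any finite commutative ring and recovers the cited theorem of \cite{BCW} as the special case $R=\F_q$ rather than relying on it; it also correctly treats $\tau$ as an anti-automorphism, whereas the paper's Remark \ref{isom of group rings} strictly covers only automorphisms of $G$ (inversion is an automorphism only when $G$ is abelian). The one blemish is structural rather than mathematical: your first two paragraphs --- the cardinality reduction, the appeal to Proposition \ref{two representations}, and the plan to strengthen (\ref{perm assump-2}) to $\tau(c_i)-d_i\in\gamma D^\perp$ --- are scaffolding that the idempotent computation renders unnecessary, since it delivers the equality $\tau(C)=D^\perp$ outright.
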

	
	\begin{proof}
		By Proposition \ref{duals}, $\varphi(C)$ and $\varphi(D^\bot)$ are equivalent codes. Let $\tau$ be the permutation between them (i.e. $\varphi(\tau(C))=\varphi(D^\bot)$). Note that $(C^\bot,D^\bot)$ is also an LCP of codes in $R[G]$ by Proposition \ref{dual LCP}, and hence $(\varphi(C^\bot),\varphi(D^\bot))$ is LCP in $\F_q[G]$ (Proposition \ref{lem2}). If $\{c'_1=0,c'_2,\ldots ,c'_s\}$ denotes the coset representatives of $C^\bot$ modulo $\gamma C^\bot$ and $\{d_1=0,d_2,\ldots ,d_t\}$, as before, denotes the coset representatives of $D^\bot$ modulo $\gamma D^\bot$, we have
		\begin{equation}\label{elements}
		\F_q[G]=\varphi(C^\bot)\oplus \varphi(D^\bot)=\{\varphi(c'_i)+\varphi(d_j): \ 1\leq i \leq s , \ 1\leq j \leq t \}.
		\end{equation}
		Since $C$ is free, $\tau(C)$ is also a free code in $R[G]$ and partitions as
		$$\tau(C)=\underset{1\leq i \leq t} {{\dot{\bigcup}}} (\tau(c_i)+\gamma \tau(C))   \ \ \mbox{(cf. (\ref{disj unions})),}$$
		where $\{c_1=0,c_2,\ldots ,c_t\}$ is the set of coset representatives of $C$ modulo $\gamma C$. 
		
		
		If $\tau(C)\cap C^\bot$ contains an element $x$ in a coset $c'_i +\gamma C^\bot$ for some $i\in \{2,\ldots , s\}$, then 
		$$\varphi(x)=\varphi (c'_i)\not\in \varphi (\tau(C))=\varphi(D^\bot)=\left\{\varphi (d_1)=0, \varphi (d_2), \ldots , \varphi(d_t) \right\} \ \mbox{(cf. (\ref{elements}))}.$$
		Therefore $\tau(C)\cap C^\bot$ is contained in $\gamma C^\bot$, hence in $\gamma \tau(C)$ (cf. Proposition \ref{NS-2} (iii)). Let $x\in \tau(C)\cap C^\bot$ be $x=\gamma \tau(c(1))=\gamma c'(1)$, where $c(1)\in C$ and $c'(1)\in C^\bot$. Then $\gamma (\tau(c(1))-c'(1))=0$ and hence the difference $\tau(c(1))-c'(1)$ is a multiple of $\gamma ^{v-1}$:
		$$\mbox{i.e. \ } \tau(c(1))=c'(1)+\gamma^{v-1}y_1, \ \mbox{for some $y_1\in R[G]$}.$$
		If $c'(1)\in C^\bot \setminus \gamma C^\bot$, then $\varphi (\tau(c(1)))=\varphi (c'(1))\not\in \varphi(\tau(C))$ again. Hence, $c'(1)=\gamma c'(2)$ for some $c'(2)\in C^\bot$ and 
		$$x=\gamma^2 c'(2)=\gamma^2 \tau(c(2)),$$
		where $c(2)\in C$. This yields $\gamma^2 (\tau(c(2))-c'(2))=0$ and hence the difference $\tau(c(2))-c'(2)$ is a multiple of $\gamma ^{v-2}$. In other words, $\tau(c(2))=c'(2)+\gamma^{v-2}y_2$ for some $y_2\in R[G]$. By the same reasoning, $c'(2)\in \gamma C^\bot$ and hence
		$$x=\gamma^3 \tau(c(3))=\gamma^3c'(3) \ \mbox{for some $c(3)\in C$ and $c'(3)\in C^\bot$}.$$ 
		Continuing in this manner, we conclude that the element $x$ in $\tau(C)\cap C^\bot$ must be $\{0\}$.
		
		Note that any permutation does not necessarily take an ideal of $R[G]$ to an ideal of $R[G]$. However $\tau$ does, as noted in Remark \ref{isom of group rings}, since it is induced from an automporhism of $G$. So, $\tau(C)$ is an ideal of $R[G]$.  By (\ref{perm assump-2}), we have (for all $1\leq i \leq t$)
		$$\tau(c_i)=d_i+\gamma x + \gamma y,$$
		for uniquely determined $x\in D^\bot$ and $y\in C^\perp$, since $R[G]=C^\perp \oplus D^\perp$. Let $1=a+b$ for $a\in C^\perp , b\in D^\perp$. Then, $\tau(c_i)= \tau(c_i)a + \tau(c_i)b$. Since $\tau(C)$ is an ideal, $\tau(c_i)a$ belongs to both $\tau(C)$ and $C^\perp$, whose intersection is $\{0\}$ (observe that we use the fact that $\tau(C)$ and $C^\bot$ are 2-sided ideals). Hence, 
		$$ \tau(c_i)=(d_i+\gamma x + \gamma y)b= (d_i+\gamma x)b + \gamma yb.$$
		Note that $yb=0$ since it belongs to $C^\perp \cap D^\perp=\{0\}$ (again, both codes are 2-sided ideals). Hence, $\tau(c_i) \in D^\perp$ for each $i$. This implies, by Proposition \ref{two representations}, that $\tau(C)\subset D^\perp$. Since $\tau(C)$ and $D^\perp$ have the same cardinalities (cf. Remark \ref{proof-obs1}), we have $\tau(C)=D^\perp$. This concludes the proof.
	\end{proof}

\begin{rem}\label{2-sided}
Note that Theorem \ref{main result-3} does not hold for LCP of one-sided group codes $(C,D)$ in $R[G]$. The proof of Theorem \ref{main result-3} utilizes the permutation $\tau$ between $\varphi(C)$ and $\varphi(D^\bot)$, whose existence is due to the result of Borello et al. and requires the codes in $\F_q[G]$ to be 2-sided group codes. Aside from this, 2-sided ideal structure of $C,C^\bot$ and $D^\bot$ are used, and pointed out, in several steps of the proof. 
\end{rem}

	\section{Acknowledgment}
	We would like to thank the reviewers for their valuable comments, which drastically improved the manuscript. The first author is supported by the T\"{U}B\.{I}TAK project 215E200, which is associated with the SECODE project in the scope of the CHIST-ERA Program. The second author is partially funded by the Spanish Research Agency (AEI) under grant PGC2018-096446-B-C21. The third author visited the Institute of Mathematics of University of Valladolid during February-March 2019. She thanks the Institute for their kind hospitality.

\end{document}